\theoremstyle{plain}
\newtheorem{theorem}{Theorem}
\crefname{theorem}{theorem}{theorems}
\Crefname{theorem}{Theorem}{Theorems}
\crefname{lemma}{lemma}{lemmas}
\Crefname{lemma}{Lemma}{Lemmas}
\crefname{claim}{claim}{claims}
\Crefname{claim}{Claim}{Claims}
\crefname{corollary}{corollary}{corollaries}
\Crefname{corollary}{Corollary}{Corollaries}
\newtheorem{proposition}{Proposition}
\crefname{proposition}{proposition}{propositions}
\Crefname{proposition}{Proposition}{Propositions}
\theoremstyle{remark}
\crefname{fact}{fact}{facts}
\Crefname{fact}{Fact}{Facts}
\crefname{remark}{remark}{remarks}
\Crefname{remark}{Remark}{Remarks}
\crefname{example}{example}{examples}
\Crefname{example}{Example}{Examples}
\crefname{conjecture}{conjecture}{conjectures}
\Crefname{conjecture}{Conjecture}{Conjectures}
\theoremstyle{definition}
\crefname{assumption}{assumption}{assumptions}
\Crefname{assumption}{Assumption}{Assumptions}
\newtheorem{definition}{Definition}
\crefname{definition}{definition}{definitions}
\Crefname{definition}{Definition}{Definitions}
\crefname{problem}{problem}{problems}
\Crefname{problem}{Problem}{Problems}
\newcommand{\1}{\mathds{1}}
\newcommand{\E}{\mathbb{E}}
\newcommand{\R}{\mathbb{R}}
\newcommand{\conv}{\operatorname{conv}}
\newcommand{\majority}{\textsc{maj}}
\newcommand{\minority}{\textsf{min}}
\renewcommand{\P}{\mathbb{P}}
\title{Preference Measurement Error, Concentration in Recommendation Systems, and Persuasion
\thanks{I thank Chara Podimata and Dylan Hadfield-Menell for many discussions about recommendation systems, Sendhil Mullainathan for nudging me to think about the role of noisy observations in Economics more broadly, and seminar audiences at MIT and CMU for helpful feedback.}
}
\author{Andreas Haupt\thanks{Stanford University}}
\date{\today}
\begin{document}
\maketitle
\begin{abstract}
Algorithmic recommendation based on noisy preference measurement is prevalent in recommendation systems. This paper discusses the consequences of such recommendation on market concentration and inequality. Binary types denoting a statistical majority and minority are noisily revealed through a statistical experiment. The achievable utilities and recommendation shares for the two groups can be analyzed as a Bayesian Persuasion problem. While under arbitrary noise structures, effects on concentration compared to a full-information market are ambiguous, under symmetric noise, concentration increases and consumer welfare becomes more unequal. We define \emph{symmetric statistical experiments} and analyze persuasion under a restriction to such experiments, which may be of independent interest.
\end{abstract}

\section{Introduction}\label{sec:introduction}
Personalized experiences are ubiquitous in our everyday lives. From movie recommendations (e.g., Netflix and Hulu) to micro-blogs (e.g., TikTok, X, and Mastodon) and e-commerce (e.g., Amazon and Mercado Libre), people turn to these recommendation systems to select entertainment, information, and products. For example, a recent study by \textcite{gomez2015netflix} revealed that $80\%$ of the approximately $160$ million hours of video streamed on Netflix were recommended by the service's recommendation system.

Personalized experiences are optimized based on feedback from consumers. Algorithms choose what consumers see, and in response, the consumer decides whether and how to engage with this recommended content. From this engagement, the system receives information about the consumer's preferences in the form of likes, comments, and shares, as well as information on what the consumer engages with. In a classical recommendation system, a consumer's recommendations are based on signals $\sigma$, 
\[
\sigma (\theta) = \theta + \varepsilon(\theta)
\]
where $\theta \in \mathbb R^d$ is a consumer type, and $\varepsilon(\theta) \sim F \in \Delta(\R^d)$ is a measurement error. A personalization algorithm will select content $x \in X$ to maximize consumer welfare $u(x; \theta)$, while undoing the effect of noise.

We are interested in the effect of such measurement noise on concentration equality. Consider two types $\theta_\majority$ and $\theta_\minority$, which each have a preferred outcome $x_\majority$ resp. $x_\minority$. There are less minority agents, $\P [\theta_\minority] = \alpha < \frac12$. This is the simple setting of the paper, and allows us to reason about the impacts of preference measurement noise. How does $\alpha$ compare to $\P[x = x_\minority]$? How does utility compare under noise and no noise? We will close the model below, and fully characterize each of them.

One might think that recommendation will favour statistical majorities, as under noise the prior will be more highly weighted, favouring the majority type, hence increasing concentration and inequity. The main observation of this paper is to add nuance to this: The intuition is correct if $\varepsilon (\theta)$ does not depend on $\theta$ (that is, errors are homoskedastic), concentration and inequality in consumer welfare are increased through preference measurement noise. However, under general noise structures, the intuition is misleading. Specifically, even under binary utilities and a general noise structure content may be recommended for any fraction from zero until twice the minority prevalence. Minority welfare might be higher than majority welfare. The main assumption is that minority preference measurements may be significantly more accurate than preference measurements for the majority.

\subsection*{Related Work}\label{subsec:relatedlit}
Our literature relates to the popularity bias in recommendation systems. Popularity bias  \parencite{abdollahpouri2019popularity} is a statistical bias arising from not correcting for propensity in recommending content. The problem of recommending based on little data on the consumer is called the \emph{cold start} and is solved with active exploration techniques
\parencite{safoury2013exploiting, zheng2017identification}. Some papers explicitly consider recommendations for small statistical minorities, which are called \emph{grey sheep consumers} \parencite{alabdulrahman2021catering,zheng2017identification}. Additionally, the present work is related to studies of personalization where consumers choose different strategies to improve their recommendation to a platform \parencite{eslami2016first, lee2022algorithmic, klug2021trick,simpson2022tame,strategic,cen2023user,cen2024measuring}. We take measurement noise as a given, and consider the impacts on market concentration and consumer utilities.

We also relate to a literature in industrial organization on recommendation systems. The simulation study \textcite{calvano2023artificial} considers a two-sided market with a personalization algorithm. The paper's simulations feature significant measurement error---\textcite[Equation (5)]{calvano2023artificial} defines noise due to measurement error twice as large as their heterogeneity among consumers. More broadly, our study can be seen as contributing to a conversation on mass \emph{vs.} niche content. \textcite{Anderson2006longtail} argues that algorithms help a long tail, that is, very infrequently bought, items to rise to prominence. \textcite{fleder2009blockbuster} takes the opposite perspective and points out additional concentration. We conclude that measurement error may or may not increase concentration, depending on how symmetric the noise is.

We also relate to a literature in game theory and behavioral economics. Similarly, the notion of quantal response equilibrium (QRE), \textcite{mckelvey1995quantal} relies on players observing a utility shock, optimizing based on it, but not observing other agent's utility shocks. Our model does not consider random utility shocks, but preference measurement error. Our model also relates to models of mechanism design with complex statistical types \parencite{cai2021recommender,parkes1999accounting,parkes2000iterative}. In contrast to these models, we do not allow the algorithm to decide on queries to the consumer, but we take measurement error as a primitive of the environment. Finally, our preference measurement error can be interpreted as a behavioral imperfection, and actions from consumers to improve signaling as sophisticated behavior, compare \textcite{laibson1997,o1999doing,o2001choice}.

Finally, this work relates to algorithmic fairness. In particular, our comparison of the market share of minority content compared to the minority's incidence is mathematically equivalent to the recommendation algorithm's calibration gap \parencite{Pleiss2017a}. Our result on the incompatibility of fairness and efficiency, \Cref{thm:inversion-concentration}, can hence be interpreted as an instance of the incompatibility of accuracy and calibration. To achieve calibration in recommendation, \textcite{steck2018calibrated} formalizes \emph{item-level} calibration. Recommendations are item-level calibrated if the consumer sees items in a proportion that they consumed them in the past. Our results differ in that we consider the \emph{population-level} distribution of recommendation.

Our techniques make use of (and our Propositions have corresponding results in) the literature on information design and Bayesian persuasion, compare \textcite{Baskerville2011}. In fact, our model is mathematically equivalent to the classical Bayesian Persuasion model \parencite{kamenica2011bayesian}, but with a very different interaction.\footnote{The correspondence is to identify the majority type with the innocent state of the world, the minority type with the guilty state of the world, allocation of majority content with the acquittal action, allocation of minority content with a conviction action, measurement error with the investigation, the judge with the personalization algorithm, and the imagined adversary with the prosecutor.} Persuasion with symmetric statistical experiments has, to the best of our knowledge, not been studied in the information design literature.

\section{Model}\label{sec:inversion-model}
There are two types of agents $\theta_{\minority}, \theta_{\majority}$. A minority $\alpha < \frac12$ is of type $\theta_{\minority}$, a majority $1-\alpha$ is of type $\theta_{\majority}$. We denote the set of types by $\Theta$ and probability distribution on types by $F \in \Delta(\Theta)$. A benevolent personalization algorithm chooses $x \in X \coloneqq \{x_{\minority}, x_{\majority}\}$ for the consumers based on a noisy observation of the type, distributed as $\sigma \colon \Theta \to \Delta(S)$. We will use $S = \R$ for our comparative statics in noise levels. A concrete example we consider is a Gaussian measurement
\[
\sigma(\theta_j) \sim N(\mu_{j}, \kappa^2)
\] 
for $j = \minority,\majority$. The algorithm wishes to maximize consumer welfare, which is binary,
\[
u(x_j, \theta_{j'}) = \begin{cases}
    1 & j = j'\\
    0 & \text{else.}
\end{cases}
\]
The timeline of the interaction is as follows. First, the consumer's type $\theta\sim F$ is realized. Next, the observation $s \sim \sigma(\theta)$ is realized and observed by the algorithm. Finally, the algorithm chooses $x \in X$ to maximize consumer welfare.

\section{The Consequences of Measurement Error}\label{sec:concentration}
In this section, we investigate the probability of allocating minority content, $\P[x_\minority]$, which we will call minority share, and the utilities of majority and minority consumers. A low probability of recommending minority content means a high amount of concentration.

\subsection{Market Concentration}
We first investigate how likely it is that minority content is allocated, $\P[x_\minority]$, which can be interpreted as a \emph{minority content market share}, or short, \emph{minority share}. A clear standard is the incidence of the minority in the population. For example, if a statistical minority is 10\% of the consumer population, we are interested in whether an optimal personalization algorithm will recommend content more, or less, than with 10\% probability. We first show that under general measurement error, it is possible that the minority anywhere from not at all to twice the minority incidence (i.e., in our example, $20\%$) is possible. It must be (weakly) less than the minority incidence for \emph{symmetric} measurement error.

\begin{figure}
\centering
\begin{tikzpicture}[scale=5]
\draw[->] (-0.1, 0) -- (1.1, 0) node[right] {$\mu$};
\draw[->] (0, -0.1) -- (0, 1.1) node[above] {$x_\minority$ recommended};
\draw (0.5, -.05) -- (0.5, .05);
\draw (1, -.05) -- (1, .05);
\draw (-.05, .5) -- (.05, .5);
\draw (-.05, 1) -- (.05, 1);

\draw (0.5, -.05) node[below] {$0.5$};
\draw (1, -.05) node[below] {$1$};
\draw (-.05, 2*0.25) node[left] {$2\alpha$};
\draw (-.05, 1) node[left] {$1$};

\draw[thick] (0, 0) -- (0.5, 0);
\draw[thick] (0.5, 1) -- (1, 1);

\draw[thick, dashed] (0, 0) -- (0.5, 1);

\draw[dashed] (0.25, 0) -- (0.25, 0.5);
\draw[dashed] (0, 0.5) -- (0.25, 0.5);
\fill (0.25, 0.5) circle (0.5pt) node[right] {$(\alpha, 2\alpha)$};
\end{tikzpicture}
\caption{The information structure maximizing minority content allocation.}
\label{fig:inverrsion-concave-closure}
\end{figure}
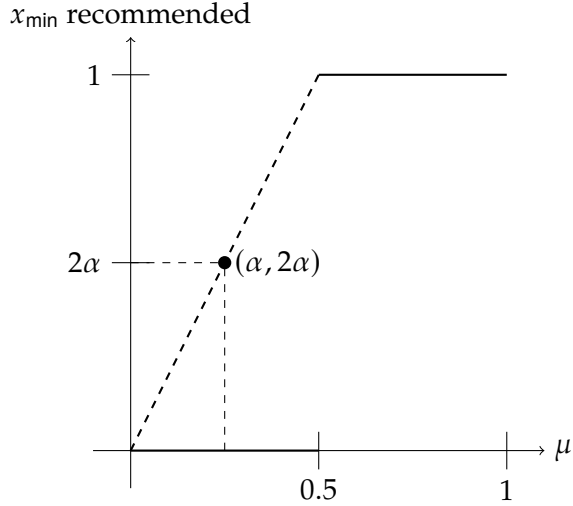

We first start with a property for general measurement error: It is possible that the minority share is anywhere from zero to twice the minority incidence. Any share in between is achievable by a measurement error.

\begin{proposition}\label{prop:inversion-extreme-info}
For any $p \in [0, 2\alpha]$, there is a measurement structure $\sigma$ such that $\P[x_\minority] = p$. For any measurement structure, $\P[x_\minority] \le 2 \alpha$.
\end{proposition}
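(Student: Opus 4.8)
The plan is to recast the problem in the belief-based language of Bayesian persuasion. Since the algorithm is Bayesian and maximizes expected welfare, upon observing a signal realization $s$ it forms the posterior $\mu \coloneqq \P[\theta_\minority \mid s]$ and compares the two available payoffs: recommending $x_\minority$ yields expected utility $\mu$, while recommending $x_\majority$ yields $1-\mu$. Hence the algorithm recommends minority content exactly when $\mu \ge \tfrac12$, breaking the indifference at $\mu = \tfrac12$ in favor of $x_\minority$ (without loss, since the algorithm is indifferent there). By Bayes-plausibility, any measurement structure $\sigma$ induces a distribution $G$ over posteriors $\mu \in [0,1]$ whose mean equals the prior, $\E_G[\mu] = \alpha$; conversely, every such mean-preserving distribution is induced by some $\sigma \colon \Theta \to \Delta(S)$, because $S = \R$ is rich enough to apply the standard splitting lemma. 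The quantity of interest becomes $\P[x_\minority] = \P_G[\mu \ge \tfrac12]$, and the proposition reduces to characterizing the achievable values of $\P_G[\mu \ge \tfrac12]$ over all distributions $G$ on $[0,1]$ with mean $\alpha$.

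For the upper bound, I would apply Markov's inequality to the nonnegative random variable $\mu$:
\[
\alpha = \E_G[\mu] \ge \int_{\{\mu \ge 1/2\}} \mu \dd G \ge \tfrac12 \, \P_G\!\left[\mu \ge \tfrac12\right],
\]
so that $\P[x_\minority] = \P_G[\mu \ge \tfrac12] \le 2\alpha$ for every measurement structure.

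For achievability, I would exhibit explicit two-point posterior distributions. The extremal case $p = 2\alpha$ is attained by placing mass $2\alpha$ at $\mu = \tfrac12$ and mass $1-2\alpha$ at $\mu = 0$; this is Bayes-plausible since $2\alpha \cdot \tfrac12 + (1-2\alpha)\cdot 0 = \alpha$, and it saturates the Markov bound. For intermediate $p \in [0, 2\alpha]$, I would place mass $p$ at $\mu = \tfrac12$ and the remaining mass $1-p$ at the single point $\mu_0 \coloneqq (\alpha - p/2)/(1-p)$; a short computation confirms $\E_G[\mu] = \alpha$ and, using $p \le 2\alpha < 1$, that $\mu_0 \in [0, \tfrac12)$, so only the mass at $\tfrac12$ triggers a minority recommendation and $\P[x_\minority] = p$. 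The case $p = 0$ is simply the uninformative signal, whose single posterior equals $\alpha < \tfrac12$.

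The main subtlety is the tie at $\mu = \tfrac12$: the extremal value $2\alpha$ is achieved only if the algorithm recommends $x_\minority$ at posterior exactly $\tfrac12$, which is optimal but requires fixing the indifference convention. I would also verify the converse direction of the splitting lemma explicitly---translating each two-point posterior distribution into a concrete $\sigma$ on $S = \R$ (for instance, a binary signal supported on two points of $\R$)---to confirm that these belief distributions are genuinely induced by measurement structures rather than merely feasible as abstract distributions over posteriors.
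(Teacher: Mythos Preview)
Your proof is correct and follows essentially the same route as the paper: both reduce to Bayes-plausible posterior distributions, and your achievability construction (mass $p$ at $\tfrac12$, mass $1-p$ at $(\alpha - p/2)/(1-p)$) is identical to the paper's. The only difference is in the upper bound: the paper invokes the concave-closure characterization from \textcite{kamenica2011bayesian}, computing $\widehat{\1_{x\ge 1/2}}(\alpha)=2\alpha$, whereas you obtain the same bound directly via Markov's inequality; your argument is more self-contained but otherwise equivalent, and your explicit attention to the tie-breaking at $\mu=\tfrac12$ and to realizing the posterior distribution by a concrete $\sigma$ fills in details the paper leaves implicit.
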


The measurement errors that achieve the extreme cases of minority share are intuitive. First, consider the case in which the measurement error is so large that it is pure noise. In this case, optimal personalization will serve the majority content as it is more liked in the population. Hence, the minority share is zero under this measurement error.

The measurement errors that lead to higher-than-incidence minority share feature some asymmetry in their informativeness. While the minority incidence is, definitionally, lower than the majority incidence in the population, it is possible that the minority incidence \emph{conditional on a signal} is higher than the majority incidence. In this case, consumers with such a signal will be allocated minority content. The extreme case achieving $2\alpha$ minority share is the case where conditional on all noisy observations of the minority type, the minority is in the majority conditional on the signal.

\begin{proof}
    We use techniques from Bayesian persuasion \parencite{Kamenica2019a}. We can view this problem as a setting where a sender chooses a preference measurement error structure $\sigma$, that is, a statistical experiment. The statistical experiment $\sigma$ induces posterior probabilities over $\theta_{\minority}$ and $\theta_{\operatorname{max}}$, which we can identify with a posterior distribution $\P[\theta_{\minority} | s]$ for $s \sim \sigma(\theta)$. Denote the distribution of these posteriors by $\mu_\sigma$. We have that
    \[
        \P[x_\minority] = \E_{x \sim \mu_\sigma} [ \1_{x \ge \frac12}]
    \]
    We can use \textcite[Proposition 1]{kamenica2011bayesian} to reduce this problem to choosing a posterior that, on average, is the prior, i.e. $\E[\mu_\sigma] = \alpha$. Such posteriors are also called \emph{Bayes-plausible} posteriors in the literature following \textcite{kamenica2011bayesian}. We can hence construct Bayes-plausible posteriors for the first part of the statement. To this end, we consider the Bayes-plausible posteriors that put mass $p$ on $\frac12$, and mass $1-p$ on $\frac{\alpha - \frac{p}{2}}{1-p}$. This is a Bayes-plausible distribution that achieves mass $\E_{x \sim \mu_\sigma} [ \1_{x \ge \frac12}] = p$. That is, probability $p$ for classifying as $\theta_\minority$. 

    For the second part of the statement, we use \textcite[Corollary 1]{kamenica2011bayesian}: The concave closure of the function $\1_{x \ge \frac12}$, which is
    \begin{equation}
    \widehat{\1_{x \ge \frac12}} = 
        \begin{cases}
            2x & x \in [0, \frac12]\\
            1 & x \in (\frac12, 1]
        \end{cases}
    \end{equation}
    evaluated at $\alpha$, hence $2\alpha$, see \Cref{fig:inverrsion-concave-closure}.
\end{proof}

In many environments, however, one does not expect such significant asymmetry, and we may have some structure where the observation probability of some signal conditional on types are the same. For this, we call a function $l \colon S \to S$ an \emph{involution} if $l(l(s)) = s$ for all $s \in S$. The main example of an involution  we consider is the \emph{midplane reflection}. It is defined by 
\[
l(s) =  \left(s - 2 \frac{(s - \mu_\minority) \cdot (\mu_\majority - \mu_\minority)}{\|\mu_\majority - \mu_\minority\|^2} (\mu_\majority - \mu_\minority) \right),
\]
and depicted in \Cref{fig:inversion-midplane-reflection}. The existence of an intraversion means that there naturally are pairs $(s, l(s))$ in the signal space. (Note that $s = l(s)$ is possible, and that the identity is an involution.)

Having an intraversion $l$, we can define symmetry of measurement error.
\begin{definition}\label{def:symmetric}
    We say that a measurement system $\sigma \colon \Theta \to \Delta(S)$ is \emph{symmetric} if the density from the minority for a point $s$ is the same as for the majority at $l(s)$,
    \begin{equation}
\sigma(\theta_\minority) (s) = \sigma (\theta_\majority) (l(s)).\label{eq:inversion-symmetry}
    \end{equation}
\end{definition}
By definition of an involution, a symmetric measurement error also satisfies $\sigma(\theta_\minority) (l(s)) = \sigma (\theta_\majority) (s)$. A main example of symmetric measurement error for the midplane reflection $l$ is $\sigma(\theta_j) = N(\mu_j, \kappa^2)$ for $\mu_j \in \R$ and some \emph{common} variance $\kappa^2$ of two distributions.

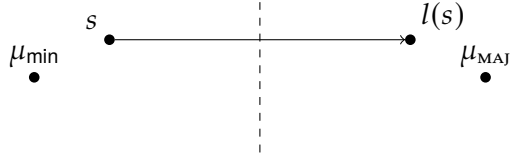
\begin{figure}
\centering
\begin{tikzpicture}

% Define coordinates for points
\coordinate (mu_min) at (0,0);
\coordinate (mu_maj) at (6,0);
\coordinate (equidistant_line_start) at (3,-1);
\coordinate (equidistant_line_end) at (3,1);
\coordinate (s) at (1, 0.5); % point s near mu_min
\coordinate (s_reflection) at (5, 0.5); % reflection of s

% Draw the line of equidistant points
\draw[dashed] (equidistant_line_start) -- (equidistant_line_end);

% Draw points mu_min and mu_maj
\fill (mu_min) circle (2pt) node[above] {$\mu_{\minority}$};
\fill (mu_maj) circle (2pt) node[above] {$\mu_{\majority}$};

% Draw point s and its reflection
\fill (s) circle (2pt) node[above left] {$s$};
\fill (s_reflection) circle (2pt) node[above right] {$l(s)$};
% Draw a line connecting s and its reflection
\draw[->, shorten <=2pt, shorten >=2pt] (s) -- (s_reflection);
\end{tikzpicture}
\caption{The midplane reflection.}
\label{fig:inversion-midplane-reflection}
\end{figure}

\begin{theorem}\label{thm:inversion-concentration}
    Assume that a measurement system is symmetric. Then $\P[x_\minority] \le \alpha$.
\end{theorem}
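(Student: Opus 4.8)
The plan is to feed the Bayesian-persuasion reformulation from the proof of \Cref{prop:inversion-extreme-info} through the extra structure that the involution $l$ imposes, exploiting the midplane reflection as a change of variables. Write $f_\minority \coloneqq \sigma(\theta_\minority)$ and $f_\majority \coloneqq \sigma(\theta_\majority)$ for the two signal densities and $r(s) \coloneqq f_\minority(s)/f_\majority(s)$ for the likelihood ratio. Since welfare is binary and the algorithm is benevolent, it recommends $x_\minority$ exactly on the set $A \coloneqq \{s : \P[\theta_\minority \mid s] \ge \tfrac12\} = \{s : r(s) \ge \tfrac{1-\alpha}{\alpha}\}$, so that $\P[x_\minority] = \int_A \big(\alpha f_\minority(s) + (1-\alpha) f_\majority(s)\big)\dd s$. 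The target is to show this is at most $\alpha = \int_S \alpha f_\minority(s)\dd s$.

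The engine is a change of variables along $l$. As the midplane reflection is an isometry it preserves Lebesgue measure, so $\int_A h(l(s))\dd s = \int_{l(A)} h(s)\dd s$ for integrable $h$; applying this to the majority term together with symmetry in the form $f_\majority(s) = f_\minority(l(s))$ gives $\int_A (1-\alpha) f_\majority(s)\dd s = (1-\alpha)\int_{l(A)} f_\minority(s)\dd s$, hence
\[
\P[x_\minority] = \alpha\int_A f_\minority(s)\dd s + (1-\alpha)\int_{l(A)} f_\minority(s)\dd s .
\]
I would then record the one structural consequence of symmetry needed here: on each pair $\{s, l(s)\}$ the posterior odds multiply to $\big(\tfrac{\alpha}{1-\alpha}\big)^2 < 1$, because $r(l(s)) = 1/r(s)$. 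Since $A = \{\text{odds} \ge 1\}$, the two points cannot both lie in $A$, so $l(A) \subseteq A^c$ (up to the null set where both densities vanish). Intuitively, Bayes-plausibility has become local: the $p$-weighted average of the two posteriors on each pair is exactly $\alpha$.

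Subtracting the displayed identity from $\alpha = \alpha\int_A f_\minority + \alpha\int_{A^c} f_\minority$ and splitting $A^c = l(A) \sqcup (A^c \setminus l(A))$ using $l(A) \subseteq A^c$, the claim collapses to the single inequality
\[
\alpha\int_{A^c \setminus l(A)} f_\minority(s)\dd s \;\ge\; (1-2\alpha)\int_{l(A)} f_\minority(s)\dd s ,
\]
that is, the minority's own mass on the \emph{buffer} region $\{\,\tfrac{\alpha}{1-\alpha} < r(s) < \tfrac{1-\alpha}{\alpha}\,\}$ must dominate, by the factor $(1-2\alpha)/\alpha$, its mass on the far reflected region $l(A) = \{\,r(s) \le \tfrac{\alpha}{1-\alpha}\,\}$.

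This displayed inequality is the crux and, I expect, the main obstacle. Symmetry and Bayes-plausibility in isolation only deliver the weaker per-pair estimate $\P[x_\minority] \le 2\alpha(1-\alpha)$ (tight, and saturated by placing all mass in a single pair sitting at the decision threshold, e.g.\ a symmetric bimodal measurement with narrow peaks just inside $r = \tfrac{1-\alpha}{\alpha}$); since $2\alpha(1-\alpha) > \alpha$ for $\alpha < \tfrac12$, the buffer must be shown to carry genuine minority mass, which is not forced by symmetry alone. I therefore expect the step to need the log-concavity / monotone-likelihood-ratio structure of the paper's leading Gaussian example $N(\mu_j,\kappa^2)$, where $A = (-\infty, s^*]$ is a true tail, the buffer $(s^*, l(s^*))$ straddles both means and absorbs the bulk of the minority mass, and a direct coupling or a Mill's-ratio tail estimate should verify the inequality. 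My concrete route is to prove the Gaussian case first and then isolate exactly the unimodality/monotonicity hypothesis under which the general buffer-domination step goes through.
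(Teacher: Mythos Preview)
Your opening moves mirror the paper's: both define the recommendation region $A = S_{\minority} = \{s: \P[\theta_{\minority}\mid s]\ge \tfrac12\}$ and both establish $l(A)\cap A=\emptyset$ via the odds argument. The paper then asserts the pointwise comparison $\P[s]\le \tfrac{\alpha}{1-\alpha}\P[l(s)]$ for every $s\in S_{\minority}$, from which $\P[S_{\minority}]\le\tfrac{\alpha}{1-\alpha}(1-\P[S_{\minority}])$ and hence $\P[S_{\minority}]\le\alpha$ would follow at once. You instead unwind everything into the single buffer inequality and then hesitate.

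Your hesitation is warranted: the paper's argument breaks at exactly the step you flag as the obstacle. In its displayed chain, the passage
\[
\alpha\,\sigma(\theta_{\majority})(l(s)) + \alpha\,\sigma(\theta_{\majority})(l(s)) \;\le\; \frac{\alpha^2}{1-\alpha}\,\sigma(\theta_{\minority})(l(s)) + \alpha\,\sigma(\theta_{\majority})(l(s))
\]
is equivalent to $(1-\alpha)\,\sigma(\theta_{\majority})(l(s)) \le \alpha\,\sigma(\theta_{\minority})(l(s))$, i.e.\ to $l(s)\in S_{\minority}$; but the paper has just shown $l(s)\in S_{\majority}$, so the inequality is invoked in the wrong direction. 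Your $2\alpha(1-\alpha)$ heuristic is in fact the sharp bound, not a slack one: take $S=\{s,l(s),\tilde s\}$ with $l(\tilde s)=\tilde s$ and
\[
\sigma(\theta_{\minority})=(1-\alpha,\;\alpha-\varepsilon,\;\varepsilon),\qquad
\sigma(\theta_{\majority})=(\alpha-\varepsilon,\;1-\alpha,\;\varepsilon).
\]
This is symmetric in the sense of \Cref{def:symmetric}, has $\P[\theta_{\minority}\mid s]>\tfrac12$ strictly, and yields $\P[x_{\minority}]=\P[s]=2\alpha(1-\alpha)-(1-\alpha)\varepsilon>\alpha$ for small $\varepsilon$. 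On this example the paper's estimate $\P[s]\le\tfrac{\alpha}{1-\alpha}\P[l(s)]$ fails (for $\alpha=0.3$ one gets $\P[s]\approx 0.42$ versus $\tfrac{\alpha}{1-\alpha}\P[l(s)]\approx 0.25$), and your buffer inequality fails too, since the buffer carries only $f_{\minority}$-mass $\varepsilon$. So the statement as written is not provable from symmetry alone; your plan to retreat to the Gaussian/MLRP setting and then isolate the extra unimodality hypothesis that forces the buffer inequality is the correct repair, not a detour.
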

Hence, under symmetric measurement error, the minority share is lower than minority incidence.

Note that \Cref{thm:inversion-concentration} holds for any involution $l$, in particular the identity function. For the identity function, $\P[\theta_\majority | s] = 1-\alpha > \alpha = \P[\theta_\minority | s]$ for any $s \in S$, and hence $\P[x_\majority] = 1$ and $\P[x_\minority] = 0$. Hence, the minority share is zero in this case.

The intuition of the proof is to show that the signal pairs $(s, l(s))$ implied by the involution cannot both lead to the allocation of minority content. While this alone is not enough to conclude, we show that if one of them is minority type, then the majority type must have strictly higher likelihood.

\begin{proof}
    Denote the set of signals $s \in S$ that are served $x_\minority$ by $S_\minority$. That is,
    \begin{equation}
    S_\minority = \left\{ s \in S : \P [ \theta_\minority | s] \ge \frac12\right\}.\label{eq:inversion-isotropy-classification}
    \end{equation}
    Define $S_\majority = S \setminus S_\minority$. We will write $\P[s]$ for the likelihood of $s$.

    We first consider points $(s, l(s))$ and observe that at least one of the points must be in $S_\majority$. Then, we show that if $s\in S_\minority$, then $l(s)$ has a higher likelihood than $s$. In a third step, we conclude.

    Let $s\in S_\minority$. By Bayes' rule, it must be the case that
    \[
    \frac{\alpha}{1-\alpha} \frac{\P [ s | \theta_\minority ]}{\P [ s | \theta_\majority]} = \frac{\alpha}{1-\alpha} \frac{\sigma( \theta_\minority)(s)}{\sigma( \theta_\majority)(s)} \ge 1.
    \]
    Hence, as $\frac{\alpha}{1-\alpha} < 1$, it must be that $\frac{\P [ s | \theta_\minority ]}{\P [ s | \theta_\majority]} > 1$. Hence, by symmetry, $\frac{\P [ l(s) | \theta_\majority ]}{\P [ l(s) | \theta_\minority]} > 1$. As $\frac{1-\alpha}{\alpha} > 1$, 
    \[
     \frac{1-\alpha}{\alpha} \frac{\P [l(s) | \theta_\majority ]}{\P [ l(s) | \theta_\minority]} > 1,
    \]
    and $l(s) \in S_\majority$. Next we show, that if $s \in S_\minority$ and $l(s) \in S_\majority$, then $\P[s] \le \frac{\alpha}{1-\alpha} \P[l(s)]$. This follows from the following chain of inequalities:
\begin{align*}
    \P[s] &= \alpha \sigma(\theta_\minority) (s)+ (1-\alpha) \sigma(\theta_\majority) (s) \\
    &\le \alpha \sigma(\theta_{\minority}) (s) + \alpha \sigma(\theta_{\minority}) (s) \\
    & = \alpha \sigma(\theta_{\majority}) (l(s)) + \alpha \sigma(\theta_{\majority}) (l(s)) \\
    & \le \frac{\alpha^2}{1-\alpha} \sigma(\theta_{\minority}) (l(s)) + \alpha \sigma(\theta_{\majority}) (l(s)) \\
    & = \frac{\alpha}{1-\alpha} \alpha\sigma(\theta_{\minority}) (l(s)) + \frac{\alpha}{1-\alpha} (1-\alpha) \sigma(\theta_{\majority}) (l(s)) \\
    & = \frac{\alpha}{1-\alpha} [ \alpha \sigma(\theta_{\minority}) (l(s)) + (1-\alpha) \sigma(\theta_{\majority}) (l(s))]\\
    &= \frac{\alpha}{1-\alpha} \P[l(s)].
\end{align*}
    The two inequalities use that $s \in S_\minority$ and $l(s) \in S_\majority$.
    
    Hence, we can decompose $S = S_\text{both} \cup S_\minority \cup l(S_\minority)$, where $S_\text{both} = \{ s \in S | s, l(s) \in S_\majority\}$. Note that this is a disjoint union. We hence have that 
    \[
    \P[S_\minority] \le  \frac{\alpha}{1-\alpha} \P[l(S_\minority)] \le  \frac{\alpha}{1-\alpha} ( 1- \P[S_\minority]).
    \]
    Algebra shows that this inequality implies $\P[S_\minority] \le \alpha$. As $\P[ x_\minority ] = \P[S_\minority]$, this concludes the proof.
\end{proof}
Hence, symmetric measurement error increases concentration. For a particular model of noise, we can show that not only is the minority share lower under measurement error than without measurement error, but also that it is \emph{decreasing} in measurement error.

Consider one-dimensional Gaussian measurement error
\[
\sigma_\kappa(\theta_j) = N(\mu_{j}, \kappa^2).
\]
It is without loss to normalize $\mu_{\minority} = 0$ and $\mu_{\majority} = 1$. For such Gaussian measurement error, the minority share decreases as measurement error increases.
\begin{proposition}\label{prop:market-concentration-gaussian}
For not too large $\kappa < (\ln (\frac{\alpha}{1-\alpha}))^{-\frac12}$, $\P_{\sigma_\kappa}[ x_\minority]$ is monotonically non-increasing in $\kappa$.
\end{proposition}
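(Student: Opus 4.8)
The plan is to make the Gaussian model fully explicit and then differentiate the resulting closed form in $\kappa$. First I would identify the set of signals on which minority content is served. Since the likelihood ratio $\sigma_\kappa(\theta_\minority)(s)/\sigma_\kappa(\theta_\majority)(s) = \exp((1-2s)/(2\kappa^2))$ is strictly decreasing in $s$, the posterior $\P[\theta_\minority \mid s]$ crosses $\tfrac12$ at a single threshold, and solving $\frac{\alpha}{1-\alpha}\exp((1-2s)/(2\kappa^2)) = 1$ gives the half-line $S_\minority = (-\infty, t(\kappa)]$ with $t(\kappa) = \tfrac12 - \kappa^2 L$, where $L \coloneqq \ln\frac{1-\alpha}{\alpha} > 0$. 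Writing $\Phi$ for the standard normal CDF and setting $a(\kappa) = t(\kappa)/\kappa = \frac{1}{2\kappa} - \kappa L$ and $b(\kappa) = (t(\kappa)-1)/\kappa = -\frac{1}{2\kappa} - \kappa L$, integrating the mixture density over $S_\minority$ yields
\[
g(\kappa) \coloneqq \P_{\sigma_\kappa}[x_\minority] = \alpha\,\Phi(a(\kappa)) + (1-\alpha)\,\Phi(b(\kappa)).
\]

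Next I would differentiate $g$ and sign the result. By the chain rule, $g'(\kappa) = \alpha\,\phi(a)\,a'(\kappa) + (1-\alpha)\,\phi(b)\,b'(\kappa)$, with $a'(\kappa) = -\frac{1}{2\kappa^2} - L$ and $b'(\kappa) = \frac{1}{2\kappa^2} - L$. The crux is the identity that makes the two terms collapse: because $b^2 - a^2 = 2L$ (the $\frac{1}{4\kappa^2}$ and $\kappa^2 L^2$ contributions cancel, leaving $2L$), the normal densities satisfy $\phi(b) = \phi(a)\,e^{-L} = \frac{\alpha}{1-\alpha}\,\phi(a)$. This is exactly the defining property of the threshold $t(\kappa)$: at the classification boundary the two prior-weighted densities agree. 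Substituting, the factors $(1-\alpha)$ and $\frac{\alpha}{1-\alpha}$ cancel, and the $\pm\frac{1}{2\kappa^2}$ terms in $a'+b'$ cancel, so
\[
g'(\kappa) = \alpha\,\phi(a(\kappa))\,\big(a'(\kappa) + b'(\kappa)\big) = -2L\,\alpha\,\phi(a(\kappa)) < 0.
\]

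The main obstacle is not analytic difficulty but spotting the density identity $\phi(b) = \frac{\alpha}{1-\alpha}\phi(a)$, which turns an a priori messy derivative—where the leftward-moving threshold $t'(\kappa) = -2\kappa L$ competes against the spreading of both Gaussians—into a single manifestly negative term; the bookkeeping behind $b^2 - a^2 = 2L$ must be carried out carefully. I would also note that this computation in fact yields $g'(\kappa) < 0$ for \emph{every} $\kappa > 0$, so the stated restriction (which I read as $\kappa^2 L < 1$) is merely a sufficient condition. It coincides with requiring $t(\kappa) > -\tfrac12$, the natural range in which a coarser argument can still sign $g'$: if one avoids the closed form and instead writes $g(\kappa) = \int_{-\infty}^{t(\kappa)} p_\kappa(s)\dd s$ for the mixture density $p_\kappa$ and applies Leibniz's rule, the boundary term $t'(\kappa)\,p_\kappa(t(\kappa))$ is plainly negative, but the interior term $\int \partial_\kappa p_\kappa$ has ambiguous sign and must be dominated—and it is precisely such a bound that would introduce a $\kappa$ restriction. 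The exact cancellation above makes that bound unnecessary, so I would present the closed-form route as the clean proof.
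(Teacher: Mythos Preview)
Your argument is correct, and it is a genuine refinement of the paper's. Both proofs start identically: compute the threshold $t(\kappa)=\tfrac12-\kappa^{2}L$, write $g(\kappa)=\alpha\,\Phi(a)+(1-\alpha)\,\Phi(b)$ with $a=t/\kappa$ and $b=(t-1)/\kappa$, and differentiate. The divergence is in how the sign of $g'$ is established. The paper keeps the two summands $\alpha\,\phi(a)\,a'$ and $(1-\alpha)\,\phi(b)\,b'$ separate and appeals to the hypothesis on $\kappa$ to force the overall sign; this term-by-term reasoning is precisely what makes a restriction on $\kappa$ enter the statement. You instead exploit the identity $(1-\alpha)\,\phi(b)=\alpha\,\phi(a)$---equivalently $b^{2}-a^{2}=2L$, which is nothing but the statement that the prior-weighted densities coincide at the decision boundary---to collapse the derivative to $\alpha\,\phi(a)\,(a'+b')=-2L\,\alpha\,\phi(a)<0$. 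That buys strict monotonicity for \emph{every} $\kappa>0$, so the hypothesis in the proposition becomes superfluous, and you correctly flag this. Your closing diagnosis of why a cruder bound would need a $\kappa$ restriction also matches the structure of the paper's argument.
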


The condition on $\kappa$ is mild. It holds for decision boundaries $x^* \le -\frac32$, far on the left of the minority type, meaning a rather extreme level of preference measurement error.

\begin{proof}
The decision boundary is given by an equality of likelihood:
\[
\alpha \cdot \frac{1}{\kappa \sqrt{2\pi}} e^{-\frac{x^{2}}{2\kappa^2}} = (1 - \alpha) \cdot \frac{1}{\kappa \sqrt{2\pi}} e^{-\frac{(x - 1)^2}{2\kappa^2}}.
\]
Algebra yields that the decision boundary is
\[
x^* = \frac{1}{2} + \kappa^2 \ln\left(\frac{\alpha}{1 - \alpha}\right).
\]
All signals $s \le x^*$ are allocated $x_{\minority}$, all other signals are allocated $s_{\majority}$. The probability that minority content is allocated is
\[
\alpha \Phi\left(\frac{1}{2\kappa} + \frac{\kappa \ln\left(\frac{\alpha}{1 - \alpha}\right)}{2}\right) + (1 - \alpha) \Phi\left(-\frac{1}{2\kappa} + \frac{\kappa \ln\left(\frac{\alpha}{1 - \alpha}\right)}{2}\right).
\]
Here, $\Phi$ is the cumulative distribution function of a standard Gaussian. The first summand is the contribution of allocating for minority types, the second summand is for majority types. It is again a result of algebra that the derivative of this function with respect to $\kappa$ is
\[
 \alpha \phi\left(\frac{1}{2\kappa} + \frac{\kappa \ln\left(\frac{\alpha}{1 - \alpha}\right)}{2}\right) \left(-\frac{1}{2\kappa^2} + \frac{\ln\left(\frac{\alpha}{1 - \alpha}\right)}{2}\right) + (1 - \alpha) \phi\left(-\frac{1}{2\kappa} + \frac{\kappa \ln\left(\frac{\alpha}{1 - \alpha}\right)}{2}\right) \left(\frac{1}{2\kappa^2} + \frac{\ln\left(\frac{\alpha}{1 - \alpha}\right)}{2}\right),
\]
where $\phi$ is the probability density function of a standard Gaussian. As probability density functions are non-negative, and by assumption on $\kappa$, this function is negative, as required.
\end{proof}
This section contained three results: First, under general measurement errors, the minority share may be zero, smaller than the incidence of the minority, or up to twice as large. When measurement error is symmetric, minority content must be (weakly) less allocated than minority incidence in the population. Hence, minority content is disadvantaged compared to its incidence in the population. Finally, for (not too large) Gaussian measurement error, the minority share is decreasing in measurement error.

\subsection{Consumer Welfare}
As in the previous subsection, we first show the consumer utilities possible under under arbitrary measurement error structures.
\begin{proposition}
    The set of achievable consumer utilities for majority and minority under arbitrary measurement error is the triangle $\conv (\{(0, 1), (1, 1), (1, \frac{1-2\alpha}{1-\alpha})\}$.
\end{proposition}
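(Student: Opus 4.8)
The plan is to characterize the set of utility pairs $(U_\minority, U_\majority)$ that the welfare-maximizing algorithm induces as the measurement structure $\sigma$ ranges over all statistical experiments, where $U_\minority = \P[x_\minority \mid \theta_\minority]$ and $U_\majority = \P[x_\majority \mid \theta_\majority]$ are the probabilities that each group is served its preferred content (I take the pair in the order minority-then-majority so as to match the three listed vertices). Because the benevolent algorithm serves $x_\minority$ exactly when $\P[\theta_\minority \mid s] \ge \tfrac12$, the only relevant feature of $\sigma$ is the induced joint law of (type, recommendation), which I would encode by $a = \P[\theta_\minority, x_\minority]$, $b = \P[\theta_\minority, x_\majority]$, $c = \P[\theta_\majority, x_\minority]$, and $d = \P[\theta_\majority, x_\majority]$; then $U_\minority = a/\alpha$ and $U_\majority = d/(1-\alpha)$.

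Next I would invoke the same reduction to Bayes-plausible posteriors used in \Cref{prop:inversion-extreme-info}, in its revelation-principle form: the joints realizable over all $\sigma$ are exactly the $(a,b,c,d) \ge 0$ with the correct type marginals $a + b = \alpha$ and $c + d = 1-\alpha$ that satisfy the two optimality (obedience) constraints $a \ge c$ and $d \ge b$, i.e.\ the recommended content is the (weakly) more likely type. For the inclusion, any $\sigma$ partitions signals into those with posterior $\ge \tfrac12$ and those below, and averaging the per-signal posteriors over each block yields $a \ge c$ and $d \ge b$. For the converse, the explicit two-signal experiment $\sigma(\theta_\minority) = (a/\alpha,\, b/\alpha)$, $\sigma(\theta_\majority) = (c/(1-\alpha),\, d/(1-\alpha))$ realizes any such $(a,b,c,d)$, with obedience ensuring the algorithm follows each recommendation.

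It then remains to project this polytope to $(U_\minority, U_\majority)$. Substituting $u = a/\alpha$ and $v = d/(1-\alpha)$, the constraint $a \ge c$ becomes $\alpha u + (1-\alpha) v \ge 1 - \alpha$ and $d \ge b$ becomes $\alpha u + (1-\alpha) v \ge \alpha$; since $\alpha < \tfrac12$ the former dominates, so the achievable set is $\{(u,v) \in [0,1]^2 : \alpha u + (1-\alpha) v \ge 1-\alpha\}$. This half-plane meets the unit square in the triangle whose vertices are the intersections $(0,1)$ and $(1, \tfrac{1-2\alpha}{1-\alpha})$ of the line $\alpha u + (1-\alpha)v = 1-\alpha$ with the square, together with the corner $(1,1)$, which is exactly the claimed $\conv$. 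As a sanity check, these vertices correspond to pure noise $(0,1)$, full information $(1,1)$, and the minority-favoring extreme $(1, \tfrac{1-2\alpha}{1-\alpha})$ where obedience just binds while every minority consumer is correctly served.

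The routine content is the linear-algebra translation of obedience into a single half-plane and locating the vertices; I expect the main obstacle to be the two-sided reduction to the $(a,b,c,d)$-polytope. I must check both that every experiment lands inside it, relying on the fact that each individual signal in the minority-recommendation block already has posterior $\ge \tfrac12$ so its conditional average does too, and that every polytope point is hit by an explicit obedient experiment, so that the image is exactly the triangle rather than a proper subset. The only delicate point there is the tie at posterior $\tfrac12$, which I break toward the minority so as to keep the feasible region, and hence the triangle, closed.
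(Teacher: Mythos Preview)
Your proposal is correct and rests on the same key inequality as the paper, namely $\P[x_\minority,\theta_\minority]\ge \P[x_\minority,\theta_\majority]$ (your $a\ge c$), which both of you obtain from the fact that every signal in the minority-recommendation block has posterior at least $\tfrac12$. The organization differs slightly: the paper first invokes convexity of the achievable set from the persuasion literature, then exhibits Bayes-plausible posteriors hitting the three vertices $(0,1)$, $(1,1)$, $(1,\tfrac{1-2\alpha}{1-\alpha})$, and finally proves the single half-plane constraint $\tfrac{\alpha}{1-\alpha}u_\minority+u_\majority\ge 1$; you instead characterize the full polytope of obedient joints $(a,b,c,d)$ and project, which builds convexity in and produces every point of the triangle via an explicit two-signal experiment rather than by appeal to an external convexity result. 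Your observation that the second obedience constraint $d\ge b$ is redundant given $a\ge c$ and $\alpha<\tfrac12$ is a nice touch absent from the paper and also dissolves the tie-breaking worry on the majority block, since in fact $d-b=(1-2\alpha)+(a-c)>0$ strictly. Both routes land on the same triangle; yours is a bit more self-contained, the paper's a bit more in the concavification idiom.
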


Note that the triangle of achievable utilities for the minority and the majority is tilted. While minority utilities of zero (hence no minority agent is served their preferred content) are possible, there is a lower bound for the majority.

\begin{proof}
    Note that by definition of the utilities, it must be that the set of achievable utilities is contained in $[0,1]^2$. In addition, the set of achievable utilities is convex, compare \textcite{kamenica2011bayesian}. As in the proof of \Cref{thm:inversion-concentration}, it is sufficient to (a) show that there are Bayes-plausible posterior distributions that yield utilities $(0, 1)$, $(1, 1)$, and $(1, \frac{1-2\alpha}{1-\alpha})$ and (b) show that $\frac{\alpha}{1-\alpha} u_\minority + u_\majority \ge 1$. (Note that $\frac{\alpha}{1-\alpha} u_\minority + u_\majority = 1$ is the line through $(0, 1)$ and $(1, \frac{1-2\alpha}{1-\alpha})$, compare \Cref{fig:inversion-triangle}.)

    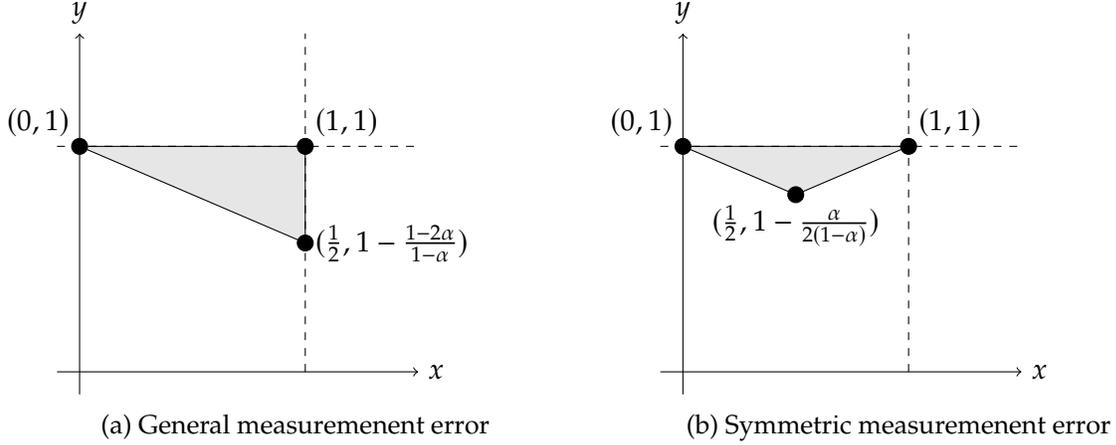
\begin{figure}
        \centering
\begin{subfigure}[t]{.48\linewidth}

\begin{tikzpicture}[scale=3]

    % Define points
    \coordinate (A) at (0, 1);
    \coordinate (B) at (1, 1);
    \coordinate (C) at (1, {.4/.7});

    % Draw the convex hull (polygon)
    \draw[fill=gray!20] (A) -- (B) -- (C) -- cycle;

    % Draw the points
    \filldraw[black] (A) circle (1pt) node[above left] {$(0, 1)$};
    \filldraw[black] (B) circle (1pt) node[above right] {$(1, 1)$};
    \filldraw[black] (C) circle (1pt) node[right] {$(\frac12, 1-\frac{1-2\alpha}{1-\alpha})$};

    % Add dashed lines for x=1 and y=1
    \draw[dashed] (1, 0) -- (1, 1.5);  % Vertical dashed line at x=1
    \draw[dashed] (-0.1, 1) -- (1.5, 1);  % Horizontal dashed line at y=1

    % Add axes
    \draw[->] (-0.1, 0) -- (1.5, 0) node[right] {$x$};
    \draw[->] (0, -0.1) -- (0, 1.5) node[above] {$y$};
\end{tikzpicture}
\caption{General measuremenent error}
\label{subfig:inversion-utilities-general}
\end{subfigure}
\begin{subfigure}[t]{.48\linewidth}
\begin{tikzpicture}[scale=3]

    % Define points
    \coordinate (A) at (0, 1);
    \coordinate (B) at (1, 1);
    \coordinate (C) at (1/2, {11/14});

    % Draw the convex hull (polygon)
    \draw[fill=gray!20] (A) -- (B) -- (C) -- cycle;

    % Draw the points
    \filldraw[black] (A) circle (1pt) node[above left] {$(0, 1)$};
    \filldraw[black] (B) circle (1pt) node[above right] {$(1, 1)$};
    \filldraw[black] (C) circle (1pt) node[below] {$(\frac12, 1-\frac{\alpha}{2(1-\alpha)})$};

    % Add dashed lines for x=1 and y=1
    \draw[dashed] (1, 0) -- (1, 1.5);  % Vertical dashed line at x=1
    \draw[dashed] (-0.1, 1) -- (1.5, 1);  % Horizontal dashed line at y=1

    % Add axes
    \draw[->] (-0.1, 0) -- (1.5, 0) node[right] {$x$};
    \draw[->] (0, -0.1) -- (0, 1.5) node[above] {$y$};
\end{tikzpicture}
\caption{Symmetric measuremenent error}
\label{subfig:inversion-utilities-symmetric}
\end{subfigure}
        \caption[Achievable consumer utilities under measurement error]{Achievable consumer utilities under measurement error.}
        \label{fig:inversion-triangle}
    \end{figure}

    To show (a), we observe that utility profile $(1,1)$ is achieved by a Bayes-plausible posterior of 
    \[
    \mu_\sigma = \begin{cases}
        0 & \text{w.p. } \alpha \\
        1 & \text{w.p. } 1-\alpha,
    \end{cases}
    \]
    which corresponds to no measurement error. Similarly, $(0,1)$ is achieved by the deterministic posterior $\mu_\sigma = \alpha$, corresponding to \enquote{perfect} measurement error, that is, no information. Finally, $(1, 1-\alpha)$ is achieved by the Bayes-plausible posterior
    \[
    \mu_\sigma = \begin{cases}
        \frac{1}{2} & \text{w.p. } 2\alpha \\
        0 & 1-2 \alpha.
    \end{cases}
    \]
    When breaking ties at equal odds for both groups in favor of the majority, this leads to utility 1 for minority group consumers and utility $\frac{1-2\alpha}{1-\alpha}$ for majority consumers.

    To show (b), it must be the case that for every agent who does not win from the majority group, there must be at least an equal mass of agents from the minority group that gets correctly allocated. Formally,
    \begin{multline*}
     \alpha u_\minority = \P[\theta_\minority] \P[x_\minority|\theta_\minority]
     = \P[x_\minority;\theta_\minority] \\\ge \P[x_\minority;\theta_\majority] = \P[\theta_\majority] \P[x_\minority|\theta_\majority] =  (1-\alpha) (1-u_\majority).
    \end{multline*}
    Rearranging, we obtain 
    \[
        \frac{\alpha}{1-\alpha} u_\minority + u_\majority \ge 1
    \]
    as desired.
\end{proof}

For symmetric preference errors, we get a more restricted set of achievable utility profiles:
\begin{theorem}
    The set of achievable consumer utilities for majority and minority under symmetric measurement error is the triangle $\conv (\{(0, 1), (1, 1), (\frac12, 1-\frac{\alpha}{2(1-\alpha)})\}$.
\end{theorem}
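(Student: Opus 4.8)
The plan is to mirror the preceding proof for general measurement error. Since the achievable set is convex, it suffices to (a) show that every symmetric experiment satisfies three half-plane inequalities whose intersection is exactly the claimed triangle, and (b) exhibit symmetric experiments realizing the three vertices. Working in the utility coordinates $(u_\minority, u_\majury)$—where I write $u_\majury$ for $u_\majority$—I would first record the accounting identity $\P[x_\minority] = \alpha u_\minority + (1-\alpha)(1-u_\majority)$, which turns both substantive constraints into lines. The inequality $\frac{\alpha}{1-\alpha}u_\minority + u_\majority \ge 1$ carries over verbatim from the general proposition (it holds for arbitrary measurement error) and is tight along the segment from $(0,1)$ to the third vertex. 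The genuinely new ingredient is the concentration bound $\P[x_\minority]\le\alpha$ from \Cref{thm:inversion-concentration}, which in these coordinates reads $\alpha u_\minority + (1-\alpha)(1-u_\majority)\le\alpha$; this line passes through $(1,1)$ and the third vertex and replaces the edge $u_\minority\le 1$ of the general triangle. Intersecting these two lines with the trivial bound $u_\majority\le 1$ shows that the three lines meet pairwise at $(0,1)$, $(1,1)$, and $(\tfrac12, 1-\tfrac{\alpha}{2(1-\alpha)})$, and a test point such as the centroid confirms their common feasible region is the bounded triangle; this gives the containment of the achievable set in the triangle.

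For the reverse containment I would establish achievability vertex by vertex. The point $(0,1)$ is realized by pure noise, $\sigma(\theta_\minority)=\sigma(\theta_\majority)$, which is symmetric for the identity involution and pins every posterior at the prior $\alpha<\tfrac12$. The point $(1,1)$ is realized by two separated point masses $\sigma(\theta_\minority)=\delta_{\mu_\minority}$ and $\sigma(\theta_\majority)=\delta_{\mu_\majority}$; these are symmetric for the midplane reflection because $l(\mu_\majority)=\mu_\minority$, and they classify both types perfectly. The crux is the third vertex, for which I would use a three-point symmetric experiment on signals $\{s_1, s_0, s_2\}$ with $s_2=l(s_1)$ and a fixed point $s_0=l(s_0)$ on the midplane, assigning minority densities $(a,c,b)$ and hence, by \eqref{eq:inversion-symmetry}, majority densities $(b,c,a)$. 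Taking $a=\tfrac12$, $b=\tfrac{\alpha}{2(1-\alpha)}$, $c=\tfrac{1-2\alpha}{2(1-\alpha)}$ makes the posterior at $s_1$ exactly $\tfrac12$ and strictly below $\tfrac12$ at $s_0$ and $s_2$; allocating minority content at the tie, exactly as the convention $\P[\theta_\minority\mid s]\ge\tfrac12$ in \eqref{eq:inversion-isotropy-classification} prescribes, yields $u_\minority=a=\tfrac12$ and $u_\majority=a+c=1-\tfrac{\alpha}{2(1-\alpha)}$.

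Finally I would invoke convexity of the achievable set, which survives the symmetry restriction: randomizing between two symmetric experiments by a public coin produces an experiment that is again symmetric, under the involution fixing the coin and applying $l$ to the signal, and whose type-conditional utilities are the corresponding mixture. Thus the convex hull of the three realized vertices is achievable, and combined with the containment above this forces equality with the stated triangle.

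I expect the achievability of the third vertex to be the main obstacle. Unlike the general case, where any Bayes-plausible posterior can be summoned directly, the symmetry constraint couples the minority and majority likelihoods through $l$, so one must produce a bona fide symmetric experiment and verify that its induced posteriors land on the correct side of $\tfrac12$. The knife-edge posterior at the decision boundary is delicate but is resolved cleanly by the tie-breaking convention built into \eqref{eq:inversion-isotropy-classification}. The secondary point requiring care is that convexity is preserved under the symmetry restriction, since it is the mixing of \emph{experiments} by a public coin, rather than of posteriors, that keeps symmetry intact.
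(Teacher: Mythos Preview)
Your proposal is correct and follows essentially the same approach as the paper: establish convexity of the achievable set under symmetry, realize the three vertices with explicit symmetric experiments (pure noise, perfect information, and the same three-signal construction with $a=\tfrac12$, $b=\tfrac{\alpha}{2(1-\alpha)}$, $c=\tfrac{1-2\alpha}{2(1-\alpha)}$), and invoke the concentration bound $\P[x_\minority]\le\alpha$ from \Cref{thm:inversion-concentration} as the new linear constraint $\alpha u_\minority+(1-\alpha)(1-u_\majority)\le\alpha$. You are in fact slightly more careful than the paper in two places: you spell out all three bounding half-planes (including the inherited inequality $\frac{\alpha}{1-\alpha}u_\minority+u_\majority\ge 1$, which the paper leaves implicit by calling the concentration line the ``additional'' constraint), and you justify convexity by mixing experiments via a public coin and explaining why the mixed experiment remains symmetric for the product involution, whereas the paper simply asserts that mixing signal assignments preserves convexity.
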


Two observations of how symmetry leads to inequality are noteworthy: With symmetry it is impossible for the majority to have lower than perfect utility without leading to lower utility for the minority, and minority utility is decreasing faster than majority utility.

A second observation is about the dependence of the utilities on $\alpha$. For $\alpha$ approaching $0$, the triangle flattens, guaranteeing the majority a high utility.

The proof of this result uses a guess-and-check approach. We first use the bound on minority shares from \Cref{thm:inversion-concentration} as an additional constraint, yielding a candidate set of achievable utilities, and then construct a symmetric measurement error that achieves this utility profile.

\begin{proof}
    Observe that the set of achievable utility profiles is convex also for symmetric measurement errors. Indeed, by mixing the signal assignments, averages of utilities are possible.

    Also note that the utility profiles $(0,1)$ and $(1,1)$ can be achieved with symmetric measurement error (uninformative resp. perfectly informative). A symmetric measurement error achieving $(\frac12, 1-\frac{\alpha}{2(1-\alpha)})$ uses three signals $s, l(s)$ and $\tilde s$ (where $l(\tilde s) = \tilde s$) and is defined as
    \begin{align*}
    \sigma(\theta_\minority)(s) &= \sigma(\theta_\majority)(l(s)) =  \frac12\\
    \sigma(\theta_\majority)(s) &= \sigma(\theta_\minority)(l(s)) =  \frac{\alpha}{2(1-\alpha)}\\
    \sigma(\theta_\minority)(\tilde s) & =  
    \sigma(\theta_\minority)(l(s)) = 1- \frac12 - \frac{\alpha}{2(1-\alpha)}.
    \end{align*}
    In this case, for signal $s$, $x_\minority$ is allocated, for $l(s)$ and $\tilde s$, $x_\majority$ is allocated. This leads to utilities $u_\minority = \frac12$ and $u_\majority = 1-\frac{\alpha}{2(1-\alpha)}$.

    Algebra shows that the additional constraint of the triangle (that is, the left top side of the line through $(\frac12, 1-\frac{\alpha}{2(1-\alpha)})$ and $(1,1)$ is given by $\alpha u_\minority + (1-\alpha)(1-u_\majority) \le \alpha$. As $\alpha u_\minority + (1-\alpha)(1-u_\majority) = \P[\theta_\minority] \P[x_\minority | \theta_\minority] + \P[\theta_\majority] \P[x_\minority | \theta_\majority] = \P[x_\minority]$, this follows from \Cref{thm:inversion-concentration}.
\end{proof}

Finally, we show that for a structured (Gaussian) measurement error, we get that inequality is increasing in measurement error. (This result is independent of the size of the error, in contrast to \Cref{prop:market-concentration-gaussian}.)
\begin{proposition}\label{prop:inversion-value-of-data}
    $u(\theta_\minority)$ is monotonically non-increasing in $\kappa$.
\end{proposition}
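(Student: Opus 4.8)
The plan is to compute the minority utility $u(\theta_\minority)$ directly from the decision boundary $x^*$ derived in the proof of \Cref{prop:market-concentration-gaussian}, and then differentiate in $\kappa$. The whole argument is a single-variable monotonicity check, so I expect it to be short; the only real content is recognizing that conditioning on the minority type isolates one Gaussian tail and that the sign of a log term is pinned down by $\alpha < \frac12$.

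First I would observe that $u(\theta_\minority) = \P[x_\minority \mid \theta_\minority]$ is exactly the probability that a minority agent's signal $s \sim N(0, \kappa^2)$ falls weakly below the threshold $x^* = \frac{1}{2} + \kappa^2 \ln\frac{\alpha}{1-\alpha}$, since all signals $s \le x^*$ are allocated $x_\minority$. Standardizing, this gives
\[
u(\theta_\minority) = \Phi\!\left(\frac{x^*}{\kappa}\right) = \Phi\!\left(\frac{1}{2\kappa} + \kappa \ln\frac{\alpha}{1-\alpha}\right).
\]
In contrast to the full minority share $\P[x_\minority]$ analyzed in \Cref{prop:market-concentration-gaussian}, this is a single tail probability rather than a sum of two contributions, precisely because we condition on the minority type.

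Next I would differentiate in $\kappa$. Writing $g(\kappa) = \frac{1}{2\kappa} + \kappa \ln\frac{\alpha}{1-\alpha}$, so that $u(\theta_\minority) = \Phi(g(\kappa))$, the chain rule yields a derivative of $\phi(g(\kappa))\, g'(\kappa)$, where $g'(\kappa) = -\frac{1}{2\kappa^2} + \ln\frac{\alpha}{1-\alpha}$. The key step is the sign of $g'$: since $\alpha < \frac12$ we have $\frac{\alpha}{1-\alpha} < 1$, hence $\ln\frac{\alpha}{1-\alpha} < 0$, so both summands of $g'(\kappa)$ are strictly negative for every $\kappa > 0$. Because the standard normal density $\phi$ is everywhere positive, the derivative is strictly negative, and $u(\theta_\minority)$ is decreasing in $\kappa$, which is stronger than the claimed non-increase.

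The one point worth flagging is why, unlike \Cref{prop:market-concentration-gaussian}, the result needs no upper bound on $\kappa$: the threshold contribution $-\frac{1}{2\kappa^2}$ and the prior-tilt contribution $\ln\frac{\alpha}{1-\alpha}$ reinforce one another instead of competing, so monotonicity holds on all of $(0,\infty)$. I do not anticipate a genuine obstacle here; the argument reduces to identifying the single monotone tail and observing that $\alpha < \frac12$ fixes the sign of the log term.
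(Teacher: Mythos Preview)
Your proposal is correct and follows essentially the same approach as the paper: express $u(\theta_\minority)$ as a single standard-normal tail probability via the decision boundary $x^*$, differentiate in $\kappa$, and note that both summands of the inner derivative are negative because $\alpha<\tfrac12$. Your argument of $\Phi$ differs from the paper's by a factor of $2$ in the $\kappa\ln\tfrac{\alpha}{1-\alpha}$ term, but yours is the version consistent with the stated $x^*$, and the sign analysis is unaffected either way.
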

\begin{proof}
    As before, the decision boundary is given by
    \[
    x^* = \frac{\kappa^2 \ln\left(\frac{\alpha}{1 - \alpha}\right) + 1}{2}.
    \]
    The probability that an agent of type $\theta_\minority$ is served $x_\minority$ is given by
    \[
    \Phi\left(\frac{1}{2\kappa} + \frac{\kappa \ln(\frac{\alpha}{1 - \alpha})}{2}\right).
    \]
    The derivative of this function with respect to $\kappa$ is
    \[
    \phi\left(\frac{1}{2\kappa} + \frac{\kappa \ln\left(\frac{\alpha}{1 - \alpha}\right)}{2}\right) \left(-\frac{1}{2\kappa^2} + \frac{\ln(\frac{\alpha}{1 - \alpha})}{2}\right).
    \]
    As $\phi$ is a non-negative function and $\ln (\frac{\alpha}{1-\alpha})<0$, this is non-positive.
\end{proof}

\printbibliography

\end{document}